\newtheorem{mydef}{Definition}[section]
\newtheorem{lemma}{Lemma}[section]
\newtheorem{theorem}{Theorem}[section]
\newtheorem{claim}{Claim}[section]
\newcommand{\cE}{G}
\newcommand{\E}{{\mathbf E}}
\newcommand{\argmax}{{\text argmax~}}
\begin{document}
\title{Contracting Experts With Unknown Cost Structures}
\author{Mark Braverman\thanks{Princeton University, research partially supported by an Alfred P. Sloan Fellowship, an NSF CAREER award, and a Turing Centenary Fellowship. }\and
Gal Oshri \thanks{Princeton University}}

\maketitle

\begin{abstract}
We investigate the problem of a principal looking to contract an expert to provide a probability forecast for a categorical event.
We assume all experts have a common public prior on the event's probability, but can form more accurate opinions by engaging 
in research. Various experts' research costs are unknown to the principal. We present a truthful and efficient mechanism for the principal's problem of contracting an expert. This results in the principal contracting the best expert to do the work, and 
the principal's expected utility is equivalent to having the second best expert in-house. Our mechanism connects scoring rules with 
auctions, a connection that is useful when obtaining new information requires costly research. 

\end{abstract}

\newpage

\setcounter{page}{1}

\section{Introduction}

This paper investigates the problem of a principal contracting an expert to provide a probability forecast on an event 
that will be observed in the future. 
 There are many situations where a party might be interested in predicting  a categorical event $\cE$ but is unable to research this itself. We call such a party the principal. This principal can contract an expert to do this research at some price. 
 If this price is fixed (without further transfers following the revelation of the outcome), the expert has no incentive to provide an accurate forecast. Instead, the principal pays the expert according to some predefined rule after observing the outcome of $\cE$. Our goal is to construct a mechanism that accounts for the expert attempting to maximize her own profit instead of the quality of information provided. Our contribution is a framework for contracting an expert with an unknown research cost structure in a way that aligns her incentives with the principal's incentives. We first discuss this problem more generally.

\subsection{Connections to Prior Work} 

Our work is an example of the principal-agent problem (or agency dilemma), which involves a principal who contracts an agent to perform a task. Both the principal and the agent aim to maximize their own utility. The question for the principal is how to align his own goals with the expert's self-directed goals. The problem is that the principal cannot observe the agent's actions directly to ensure the agent has fulfilled her obligations \cite{Eisenhardt}. Thus, we have a case of asymmetric information. This creates two problems: adverse selection and moral hazard. Adverse selection reflects the principal's ineffectiveness in determining whether the agent reveals her abilities truthfully before a deal is made. Moral hazard refers to the principal being unable to ensure the agent exerts maximum (or sufficient) effort after the deal has been made. Examples of the principal-agent problem include an `insurer who cannot observe the level of care taken by the person being insured' \cite{Grossman} and shareholders that cannot control the managers' actions \cite{Bebchuck}. In our case, the agent is an expert and the principal is a party that wants to learn information from the expert but cannot verify her information, as the expert does the research independently from the principal. A good mechanism needs to deal with both these problems by first ensuring that the best expert is chosen to do the job (where `best' is defined according to the principal's objective function), and then ensuring that once contracted the expert exerts an optimal level of effort in her research. 

This area of research has been applied extensively to probability forecasts. A simple example is a news channel, acting as the principal, that wants to give a forecast of the probability that it will rain the next day. They contract an expert (a meteorologist) who tells them the probability that it will rain. The news channel then pays the expert an amount depending on whether it rains or not (i.e. they observe the event before the expert's payment is determined). We cannot know whether the expert was `correct' as she provided a probability. However, we can expect that she will be paid more if, for example, she was quite certain that it would rain and it did rain than if it did not rain. Another example can be found in \cite{Olvera}, which discusses probability forecasts of inflation and GDP.

One approach to the problem of probability elicitation is the use of scoring rules, in which the expert reports a probability and is rewarded based on a function of the probability and the observed event. The main interest is in proper scoring rules, where we can show that the expert can do best (maximize her utility) by reporting the probability truthfully. There is a substantial body of work on proper scoring rules, see for example \cite{Gneiting}, \cite{Lichtendahl}, \cite{Winkler} and references therein. This has led to research into modifying proper scoring rules so that experts with nonlinear utility functions also make truthful reports. When the utility function of the expert is known, \cite{Winkler2} shows that a composite function of the utility function and a proper scoring rule is a proper scoring rule for the expert. \cite{Offerman} suggests a method that can be used when the utility function is not known. The method involves first learning the expert's preferences through experiments and then rewarding the expert with a proper scoring rule but correcting for biases suggested by the preferences found. Recently, Hossain and Okui designed and experimentally tested a method for constructing scoring rules that are incentive compatible regardless of the expert's risk-preference in \cite{Hossain}. They show that the new scoring rule elicits better probabilities than the quadratic scoring rule, which is one of the canonical proper scoring rules.

Another relevant area of research in probability elicitation is the study of prediction markets \cite{arrow2008promise}. These markets allow participants to buy and sell contracts in a form similar to traditional markets. For a binary event such as the outcome of the US elections, one form the contracts can take is the following: if a given candidate (say Obama) is elected, each contract pays \$1, and \$0 otherwise. The market price of these contracts, which is determined based on the buying and selling patterns of the participants, is an aggregation of the market's information. If many people buy such contracts, it suggests they believe the event will occur (at least with a probability higher than that determined by the current price of the contract). An analysis of the accuracy of prediction markets can be found in \cite{Berg}, which shows that they do well compared with, for example, political polls.  \cite{Hanson} discusses market scoring rules, which are similar to prediction markets in that they allow individuals to update the current aggregate belief but do not require a matching buyer/seller for each update. As most prediction markets are unsubsidized, the main weakness of predictive markets is that there appear to be no market equilibria where any of the experts does research.

While probabilistic forecasts from experts have been widely discussed in the literature, most of the previous research does not consider the cost the expert incurs in acquiring information or doing research. The use of a proper scoring rule incentivizes the expert to report her belief truthfully, but assumes that the expert already has such a belief. This assumption is unreasonable in many situations. For example, an expert in possession of an expensive medical test will not form an opinion about a patient's condition until the test is administered. It is also possible that it is more expensive for the expert to do higher quality research/testing. Since the expert acts as a self-interested agent, she will do more work only if that work will lead to gains that outweigh the higher cost. For example, if the medical expert can spend some additional amount of money to more precisely predict the condition, but that this will increase her payment by an amount lower than the cost, the expert will not do this. Clemen shows how to construct proper scoring rules that incentivize the expert to perform a number of trials chosen by the principal to achieve a certain level of precision in \cite{Clemen}. However, Clemen requires that the cost structure of the expert is known (i.e. how much it costs the expert to perform any number of trials). This assumption is restricting in many settings: an expert may consider her cost structure to be legally private information and may also lie about her costs to increase her profit. \cite{Carroll} finds the optimal contract in a model similar to ours, where a principal contracts an expert to acquire information and is unaware of the technology available to the expert. Our model is different in that we allow the principal to choose from several experts instead of being restricted to finding the best contract for a particular expert. Our contribution is a mechanism that does not require any assumptions about the experts' cost structures.

\subsection{Results}


Our starting point is a discrete categorical event $\cE$ with $n$ potential outcomes. The principal has a prior distribution $\rho_0$ 
on the outcomes of $\cE$, and would like to contract with an expert to acquire additional information. The principal has a utility function $P(\rho)$ for a posterior distribution $\rho$. Without loss of generality $P$ is convex in $\rho$ (i.e. the principal has 
a non-negative utility in expectation for additional information), and $P(\rho_0)=0$. There is a set of experts $e_1,\ldots, e_k$, 
who initially share the common prior $\rho_0$ (think for example of $e_1,\ldots,e_k$ as medical testing facilities, and of $\cE$ as 
the condition of a yet untested patient). Each expert $e_i$ has a set of testing technologies $M_i=\{\mu_1,\mu_2,...,\mu_k\}$ where each $\mu$ is a distribution on posterior distributions $\rho\sim\mu$ such that $\E_\mu \rho=\rho_0$ and $k$ is the number of tests available to the expert (we drop the index henceforth). Each $\mu$ has cost 
$C_i(\mu)$. We assume that if $\mu_0=1_{\rho_0}$ then $C_i(\mu_0)=0$, i.e. the expert always has the option of doing nothing at no cost. The optimal utility that expert $e_i$ would deliver if she were in-house (i.e. fully observable and controlled by the principal) would be $$ U_i:= \max_{\mu\in M_i} \left[ \E_{\rho\sim \mu} P(\rho)-C_i(\mu) \right].$$
If the principal where omniscient, he would pick the expert maximizing $U_i$. If he could also force the experts to reveal their efforts truthfully, he would be able to achieve utility $U_i$ to himself. We show that there is a truthful mechanism 
that allows the principal to pick the best expert, and to derive the utility $U'$ given by the second-best expert. In other words, 
the principal's utility corresponds to having the second-best expert in-house. 


\begin{theorem} \label{thm:main_theorem}
There is a dominant strategy truthful mechanism which assigns the prediction job to expert $e_i$ such that $i=\argmax U_i$. Moreover, the principal's expected utility achieved by executing the mechanism is $U_P = \max_{j\neq i} U_j$. 
\end{theorem}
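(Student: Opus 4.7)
The plan is to combine a proper scoring rule built from the principal's utility function $P$ with a second-price (VCG-style) auction over the experts' private values $U_i$. The scoring rule handles the moral-hazard half of the problem, aligning the winner's ex-post incentives with the principal's preferences, while the auction handles the adverse-selection half by selecting the best expert and setting the principal's payment to the second-best value.

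First I would build a scoring rule $S(\rho, y)$ with two key properties: $\E_{y\sim\rho}[S(\rho, y)] = P(\rho)$, and $\rho' \mapsto \E_{y\sim\rho}[S(\rho', y)]$ is maximized at $\rho' = \rho$ (properness). The classical McCarthy/Savage construction provides both simultaneously: since $P$ is convex on the simplex, pick any subgradient $g(\rho)$ of $P$ at $\rho$ and set $S(\rho, y) := P(\rho) + \langle g(\rho), \mathbf{1}_y - \rho\rangle$, where $\mathbf{1}_y$ is the indicator vector of the observed outcome $y \in \cE$. Convexity immediately yields both properties.

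The mechanism is then the natural one. Each expert $e_i$ submits a bid $b_i$; set $i^* = \argmax_i b_i$ and $b^{(2)} = \max_{j \neq i^*} b_j$. Expert $e_{i^*}$ is awarded the contract, privately picks $\mu \in M_{i^*}$, observes $\rho \sim \mu$, and reports some $\rho'$. After the outcome $y$ of $\cE$ is realized, the principal pays her $S(\rho', y) - b^{(2)}$. Verification then proceeds in three short steps that I would carry out in order: (i) by properness of $S$, conditional on winning the expert reports $\rho' = \rho$ and optimizes over $\mu$ to achieve expected net payoff $U_i - b^{(2)}$; (ii) the bidding subgame therefore reduces to a standard second-price auction with valuations $U_i$, a winning payoff of $U_i - b^{(2)}$, and a losing payoff of $0$, so $b_i = U_i$ is a dominant strategy; (iii) under truthful play $i^* = \argmax_i U_i$, and the principal's expected utility is $\E[P(\rho) - S(\rho, y)] + b^{(2)} = b^{(2)} = \max_{j \neq i^*} U_j$, using $\E_{y \sim \rho}[S(\rho, y)] = P(\rho)$.

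The most delicate part of the argument --- the design point I expect to need the most care --- is the alignment between the scoring rule's expected payment under truthful reporting and the principal's own valuation of the posterior. Any scoring rule that is merely proper but whose expected score does not equal $P(\rho)$ would distort the induced private values away from $U_i$, so the second-price step would no longer deliver $\max_{j \neq i^*} U_j$ to the principal. The McCarthy/Savage construction is tailored to give $\E_{y \sim \rho}[S(\rho, y)] = P(\rho)$ exactly, which is what makes the decomposition clean. The remaining bookkeeping is routine: individual rationality of the winner ($U_{i^*} - \max_{j \neq i^*} U_j \ge 0$, immediate), and that a losing expert has no profitable deviation (the option $\mu_0 = 1_{\rho_0}$ at zero cost ensures her fallback utility is $0$, and the standard second-price analysis rules out profitable overbidding).
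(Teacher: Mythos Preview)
Your proposal is correct and follows essentially the same approach as the paper: the McCarthy/Savage scoring rule you construct is exactly the contract the paper derives from the preference curve (their payment $P_\beta(\rho^*) - \langle \nabla P_\beta(\rho^*), \rho^* \rangle + \frac{\partial}{\partial\rho_i} P_\beta(\rho^*)$ is your $S(\rho,y)-\beta$ with $g=\nabla P$), and subtracting the second-highest bid from the score is precisely their device of handing the winner the contract corresponding to $P_{b^{(2)}}$. The only cosmetic difference is that you work with subgradients while the paper assumes $P$ is twice differentiable and uses $\nabla P$; the truthfulness argument via case analysis on $v_1$ versus $b'=\max_{j\neq 1}b_j$ is identical.
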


The mechanism is a second-price auction that involves the experts bidding on the best preference curve they can achieve and giving the top bidder the second-best preference curve. We prove that this mechanism is truthful: the experts maximize their utility by bidding the true best preference curve they can achieve. The main limitation of this mechanism is the inability to aggregate information from multiple experts -- something we will discuss briefly in the conclusions section.

\section{Problem Definition} \label{sec:problem_def}

Our goal is to design a mechanism that allows the principal to learn the outcome probabilities of $G$  (or rather, the expert's belief distribution $\rho$) by hiring an expert from a list of available experts. This mechanism should provide certain guarantees for both the principal and the expert in the form of properties of the mechanism. The first of these is the truthful property:

\begin{mydef} 
Truthful property: making truthful reports is a dominant strategy equilibrium for the experts.
\end{mydef}

This means that the expert will maximize her expected utility by making truthful reports regardless of the strategy employed by other players (i.e. experts). A truthful mechanism is useful because it does not require agents to think about what other strategies will maximize their utility. Furthermore, the principal can believe the expert's report instead of having to consider whether the expert lied. We also want the mechanism to fulfill the participation constraint:

\begin{mydef}
Participation constraint: the expected utilities of the principal and the experts cannot be less than zero.
\end{mydef}

The participation constraint requires that the agents can have a higher utility by participating instead of not participating (we set the utility of them not participating to be the baseline utility of zero). This constraint is important because if the expert would have a negative utility by participating in the mechanism, she would prefer not to join. Finally, we want the mechanism to be efficient:

\begin{mydef}
Efficient property: the expert contracted is the expert who can achieve the best principal preference curve (with a non-negative expected utility).
\end{mydef}


\section{Principal Preference Curves} \label{sec:principal_preference_curves}

The principal is likely to have a trade-off between the probability distribution $\rho$ reported by the expert and the cost of acquiring this information. Thus, the principal would be willing to lose some precision in order to pay a smaller amount. To formalize this trade-off, we assume there is a price $P$ that the principal is willing to pay for every reported probability distribution such that the utility of the principal remains the same. This is the principal's utility function $P(\rho)$ which we take to be twice differentiable. Any point on the surface given by the utility function is equally preferred by the principal. Thus, function $P(\rho)$ is a preference curve (also known as an indifference curve). 


There is one important property of the preference curve that we will use, which is convexity. We have:

\begin{lemma} \label{lem:convex_function}
$P(\rho)$ is a convex function of $\rho$.
\end{lemma}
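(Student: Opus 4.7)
The plan is to derive convexity from the economic content of the model stated in Section~1: any Bayesian-consistent test has non-negative expected value to the principal, because the principal can always discard its outcome and act as before.

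Pick arbitrary posteriors $\rho_1, \rho_2$ and a weight $t \in [0,1]$, and set $\bar\rho := t\rho_1 + (1-t)\rho_2$. Consider a hypothetical two-outcome test $\mu$ that produces posterior $\rho_1$ with probability $t$ and posterior $\rho_2$ with probability $1-t$. Since $\E_\mu \rho = \bar\rho$, this $\mu$ qualifies as a testing technology in the sense of Section~1, with $\bar\rho$ playing the role of the prior. Its expected utility to the principal is $\E_{\rho \sim \mu} P(\rho) = t\,P(\rho_1) + (1-t)\,P(\rho_2)$, whereas acting on the belief $\bar\rho$ alone yields $P(\bar\rho)$.

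Next I would invoke the assumption (stated informally in Section~1) that additional information has non-negative expected value to the principal, but now applied with $\bar\rho$ in the role of the prior. This is legitimate because $P(\cdot)$ is a property of the principal's preferences over posterior beliefs, and nothing in the definition of ``value of information'' singles out $\rho_0$ rather than any other reference point. A more primitive justification, if desired, would be to write $P(\rho) = V(\rho) - V(\rho_0)$ where $V(\rho) := \max_a \sum_s \rho_s\, u(a,s)$ is the indirect utility obtained by optimally choosing an underlying action $a$ against belief $\rho$; as a pointwise maximum of affine functions of $\rho$, $V$ is convex, and subtracting a constant preserves convexity.

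Either route yields $P(\bar\rho) \leq t\,P(\rho_1) + (1-t)\,P(\rho_2)$, which is convexity. The step I expect to defend most carefully is the change of reference point in the information-value argument --- passing from $\rho_0$ to an arbitrary $\bar\rho$. Once that is accepted, or bypassed via the indirect-utility representation, the lemma follows immediately from the resulting Jensen-type inequality.
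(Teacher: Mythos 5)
Your fallback route is exactly the paper's proof: the paper defines $P(\rho)=\max_{a\in\mathcal{A}}\sum_{\sigma\in\Sigma}\rho(\sigma)a(\sigma)$ and shows convexity by fixing the optimal action at the mixture point and using its suboptimality at the endpoints, i.e.\ the standard ``pointwise maximum of affine functions'' argument that you state as the indirect-utility representation $V(\rho)=\max_a\sum_s\rho_s u(a,s)$. So, read through that route, the proposal is correct and essentially coincides with the paper.

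Your primary route, however, does not stand on its own. The Section~1 assumption of non-negative expected value of information is stated only relative to the prior $\rho_0$ (indeed the paper presents it as a gloss on ``$P$ is convex,'' not as an independent primitive), and non-negativity of the value of information at a single reference point does not imply convexity: what you need is the Jensen-type inequality at \emph{every} reference point $\bar\rho$, which is precisely the statement of the lemma. So ``changing the reference point'' is not a step you can defend from the stated assumptions --- it is the thing being proved --- and the only way you offer to justify it is the indirect-utility representation, at which point the information-value framing adds nothing. In short: drop route one, keep route two, and you have the paper's proof.
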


\begin{proof}
The principal receives a report $\rho$ from the expert and has some set of actions he can take $\mathcal{A}$. Each action $a$ is a function from the set of possible outcomes ($\Sigma$) to a real payoff $a: \Sigma \rightarrow \mathbb{R}$. We can look at $\rho$ as a function of the outcomes ($\sigma$ represents an outcome) to the probability of each outcome occurring. We can now calculate the payoff for the principal when he receives a report $\rho = \rho(\sigma)$ as he chooses the action that maximizes his payoff:
$$P(\rho) = \max_{a \in \mathcal{A}} \sum_{\sigma \in \Sigma} \rho(\sigma) a(\sigma)$$
The expert chooses the action that maximizes her utility given the information she has. We want to show that $P(\rho)$ is convex: $t P(\rho') + (1 - t) P(\rho'') \geq P(t \rho' + (1 - t) \rho'')$. Let $\rho = t \rho' + (1 - t) \rho'')$. Let $a_\rho$ be the action that the principal chooses when receiving report $\rho$: $a_\rho = \operatorname*{arg\,max}_{a \in \mathcal{A}} \sum_{\sigma \in \Sigma} \rho(\sigma) a(\sigma)$. When the principal receives report $\rho$ and takes action $a_\rho$, he receives payoff $P(\rho) = P(\rho, a_\rho)$. If the principal received report $\rho'$ or $\rho''$ but still took action $a_\rho$, his payoff is at most the payoff he would get by choosing the action that maximizes the payoff (i.e. $P(\rho') \geq P(\rho', a_\rho)$ and $P(\rho'') \geq P(\rho'', a_\rho)$). Therefore, $t P(\rho') + (1 - t) P(\rho'') \geq t P(\rho', a_\rho) + (1 - t) P(\rho'', a_\rho) = P(\rho, a_\rho) = P(\rho) = P(t \rho' + (1 - t) \rho'')$, as required.
\end{proof}

The preference curve we defined can be shifted down in the price ($P$) direction, which is equivalent to the principal paying less for any given probability distribution $\rho$. The further down the function is shifted, the less the principal pays for each value of $\rho$, and the higher the utility for the principal will be. Different preference curves can be achieved by such shifts and each preference curve has a particular utility associated with it. The principal's goal then is to maximize this downward shift and therefore, to contract an expert to perform research corresponding to the best possible preference curve. We quantify the downward shift by the parameter $\beta$: we let $P_{\beta}(\rho)$ be the preference curve which has the point $P_{\beta}(\rho_0) = -\beta$. Thus, when $\beta = 0$, we have that $P_0(\rho_0) = 0$. This is the preference curve with 0 utility for the principal, so a payment of 0 means no gain or loss in utility from the ground state where no expert is contracted. The higher $\beta$ is, the higher the utility for the principal is (specifically, it is equal to $\beta$). We can thus say that a preference curve is better than another preference curve if its $\beta$ is higher. See Figure \ref{fig:pref_curve_2} for an example of several preference curves with different $\beta$ parameter values in the case where $G$ is a binary event ($n = 2$, so we can let $\rho$ in the figure be the probability of outcome 1 occurring). Note that the preference curves are not symmetric, symbolizing that the principal cares more about the accuracy of one of the outcomes. An example of this is when the principal wants to know if an airplane engine is broken or not. There isn't much difference between engine being broken with 80\% or 90\% probability (needs to be fixed). However, there is a big difference between 80\% and 90\% probability that the engine is not broken (precise level of certainty matters).

\begin{figure}[ht]
	\centering
		\includegraphics[scale=0.4]{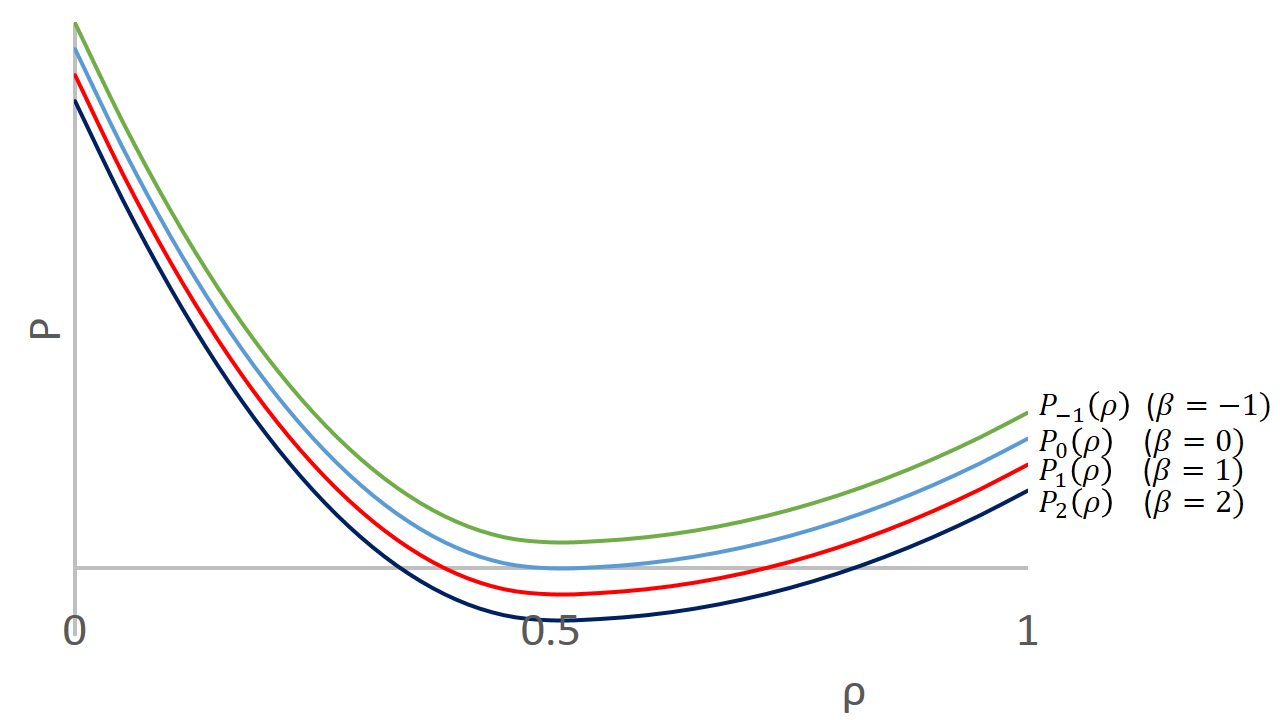}
	\caption{An example of $P_\beta(\rho)$ with several values of $\beta$ and asymmetric preference curves.}
	\label{fig:pref_curve_2}
\end{figure}

We now describe the contract that we use in our mechanism.

\section{Contract Description} \label{sec:contract_description}

A contract is an agreement between the principal and the expert regarding the information the expert provides the principal and how much the principal pays the expert (where this payment can depend on the information provided). Our goal is to design a contract which will force the expert to do research matching a chosen principal preference curve in order to maximize her expected utility. This means that the principal only pays the expert the amount corresponding to the $\rho$ the expert truly believes as given by the preference curve (a function from $\rho$ to a price). 

We want to find a contract where we can prove that given a contract corresponding to preference curve $P_\beta(\rho)$, the expert will report her belief $\rho^*$ truthfully and that this will lead to an expected payment from the principal to the expert of $P_\beta(\rho^*)$. We use a contract that corresponds to a proper scoring rule with an expected score equal to $P_\beta(\rho)$, as given  in \cite{Gneiting}. 

The contract corresponding to $P_\beta(\rho)$ is as follows:
\begin{enumerate}
\item Expert reports her belief $\rho^*$
\item Outcome $i$ is observed for event $G$
\item Expert is paid $P_\beta(\rho^*) - \left\langle {\nabla{P_\beta(\rho^*)} ,\rho^* } \right\rangle + \frac{\partial}{\partial\rho_i} P_\beta(\rho^*)$
\end{enumerate}

Where $\left\langle {\cdot , \cdot } \right\rangle$ is a scalar product. The expected payment to the expert is equal to $P_\beta(\rho^*)$ because that is the equivalent of function $G(p)$ in Theorem 2 of \cite{Gneiting} (the expected score function). Therefore:

\begin{lemma} \label{lem:generalized_corresponding_contract}
If the expert is given the contract corresponding to $P_\beta(\rho)$ and reports her belief $\rho^*$ truthfully, the expected payment from the principal to the expert is $P_\beta(\rho^*)$.
\end{lemma}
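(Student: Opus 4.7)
The plan is to verify the lemma by direct computation: take the expectation of the payment rule over outcome $i$ drawn according to the expert's truthful belief $\rho^*$ and show that the terms collapse to $P_\beta(\rho^*)$. Since $\rho^*$ is assumed truthful, the outcome $i$ is distributed according to $\rho^*$ itself, so for a random variable $f(i)$ we have $\E_{i\sim\rho^*}[f(i)] = \sum_i \rho^*_i f(i)$.

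Concretely, I would write the expected payment as
\[
\E_{i\sim\rho^*}\Bigl[P_\beta(\rho^*) - \langle\nabla P_\beta(\rho^*),\rho^*\rangle + \tfrac{\partial}{\partial\rho_i}P_\beta(\rho^*)\Bigr].
\]
The first two terms inside the bracket are constants with respect to $i$, so they come out of the expectation unchanged. The third term, after averaging over $i\sim\rho^*$, becomes $\sum_i \rho^*_i\,\tfrac{\partial}{\partial\rho_i}P_\beta(\rho^*) = \langle\nabla P_\beta(\rho^*),\rho^*\rangle$. This exactly cancels the second term, leaving $P_\beta(\rho^*)$, as desired.

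There is no real obstacle here: the payment rule is literally designed so that its conditional expectation reproduces $P_\beta$, which is the content of Theorem 2 of Gneiting and Raftery's characterization of proper scoring rules for a convex expected score function $G=P_\beta$ (its convexity is guaranteed by Lemma \ref{lem:convex_function}). The only care required is to keep the gradient evaluated at the \emph{reported} belief $\rho^*$ rather than at the true state, and to note that the twice-differentiability assumption on $P$ from Section \ref{sec:principal_preference_curves} is what lets us write the payment rule using $\nabla P_\beta$ in the first place. Thus the proof essentially reduces to invoking the cited scoring-rule identity and performing the one-line expectation computation above.
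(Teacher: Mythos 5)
Your proposal is correct and matches the paper's own argument: the paper likewise invokes Theorem 2 of the Gneiting--Raftery characterization (with $G = P_\beta$) and then performs exactly the same one-line expectation computation, in which the constant terms factor out and $\sum_i \rho_i^*\,\frac{\partial}{\partial\rho_i}P_\beta(\rho^*)$ cancels $\langle\nabla P_\beta(\rho^*),\rho^*\rangle$, leaving $P_\beta(\rho^*)$. No differences worth noting.
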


Furthermore, the payment of the contract, given by Theorem 2 of \cite{Gneiting}, is a proper scoring rule (the payment scheme to the expert is such that she maximizes her utility by making a truthful report). This gives us the following result:

\begin{lemma} \label{lem:generalized_truthful_contract}
An expert given a contract corresponding to the principal's preference curve characterized by $P_\beta(\rho)$ maximizes her expected profit by reporting the truthful value of her belief $\rho^*$.
\end{lemma}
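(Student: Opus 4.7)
The plan is to verify directly that the payment scheme is a Savage-type proper scoring rule derived from the convex function $P_\beta$, and then use the first-order characterization of convexity together with Lemma \ref{lem:convex_function} to conclude that any misreport yields a weakly smaller expected payment than the truthful one.

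First I would fix the expert's true belief $\rho$ and let $\hat\rho$ denote an arbitrary report. Since the outcome $i$ is drawn from $\rho$, I would compute the expected payment by linearity of expectation:
\begin{equation*}
\E_i\!\left[ P_\beta(\hat\rho) - \langle \nabla P_\beta(\hat\rho),\hat\rho\rangle + \tfrac{\partial}{\partial \rho_i} P_\beta(\hat\rho) \right]
= P_\beta(\hat\rho) - \langle \nabla P_\beta(\hat\rho),\hat\rho\rangle + \langle \nabla P_\beta(\hat\rho),\rho\rangle,
\end{equation*}
using that $\E_i[\partial P_\beta/\partial\rho_i] = \sum_i \rho_i\, \partial P_\beta/\partial\rho_i = \langle \nabla P_\beta(\hat\rho),\rho\rangle$. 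Rearranging gives expected payment $P_\beta(\hat\rho) + \langle \nabla P_\beta(\hat\rho),\rho-\hat\rho\rangle$. Plugging in $\hat\rho = \rho$ recovers Lemma \ref{lem:generalized_corresponding_contract} as a sanity check.

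Next I would invoke convexity of $P_\beta$, established in Lemma \ref{lem:convex_function}, in its first-order form: for any $\rho,\hat\rho$ in the simplex, $P_\beta(\rho) \geq P_\beta(\hat\rho) + \langle \nabla P_\beta(\hat\rho), \rho - \hat\rho\rangle$, with equality when $\hat\rho = \rho$. Substituting the expression obtained above, this says precisely that truthful reporting yields expected payment $P_\beta(\rho)$, while any misreport $\hat\rho \neq \rho$ yields expected payment at most $P_\beta(\rho)$. Since the expert's profit is expected payment minus the already-incurred research cost (which does not depend on the report), maximizing expected payment is equivalent to maximizing expected profit, and so reporting $\rho^* = \rho$ is optimal.

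I do not expect a real obstacle here: the statement is essentially a restatement of the classical Savage representation for convex expected-score functions, and the only ingredients needed beyond bookkeeping are the expectation calculation above and the gradient inequality for convex functions. The one subtlety worth flagging is that $P_\beta$ is defined on the probability simplex rather than all of $\mathbb{R}^n$, so $\nabla P_\beta$ should be interpreted as a supergradient tangent to the simplex; since the paper assumes $P_\beta$ is twice differentiable, this causes no difficulty, and the gradient inequality remains valid for $\rho,\hat\rho$ both in the simplex.
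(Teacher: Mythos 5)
Your proposal is correct and follows essentially the same argument as the paper: the expected payment under report $\hat\rho$ equals the tangent hyperplane at $\hat\rho$ evaluated at the true belief, and the supporting-hyperplane (first-order convexity) inequality from Lemma \ref{lem:convex_function} shows this is maximized by reporting truthfully, which is exactly the paper's geometric reasoning (there phrased via the cited Gneiting--Raftery construction). Your explicit expectation computation and the remark about interpreting $\nabla P_\beta$ on the simplex are fine but do not change the route.
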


It is instructive to understand why these lemmas are true from a geometric perspective. Suppose the expert reports $\rho^*$ for the contract corresponding to preference curve $P_\beta(\rho)$. We find the hyperplane tangent to the surface of $P_\beta(\rho)$ at $\rho$ (let this hyperplane be $L(\rho)$): 

\begin{align*}
L(\rho) &= P_\beta(\rho^*) + \left\langle {\nabla{P_\beta(\rho^*)} , (\rho - \rho^*) } \right\rangle  \\
&= P_\beta(\rho^*) - \left\langle {\nabla{P_\beta(\rho^*)} , \rho^* } \right\rangle + \left\langle {\nabla{P_\beta(\rho^*)} , \rho } \right\rangle
\end{align*}

Let $\hat{\rho_i}$ be the belief vector where $\rho_i = 1$ and $\rho_{j \neq i} = 0$. If outcome $i$ is observed, the expert is paid $L(\hat{\rho_i})$. This means the expert is paid the value of $L(\rho)$ when $\rho$ is the belief that is certain that outcome $i$ will occur. Therefore, the payment for outcome $i$ is $P_\beta(\rho^*) - \left\langle {\nabla{P_\beta(\rho^*)} , \rho^* } \right\rangle + \left\langle {\nabla{P_\beta(\rho^*)} , \hat{\rho_i} } \right\rangle = P_\beta(\rho^*) - \left\langle {\nabla{P_\beta(\rho^*)} , \rho^* } \right\rangle + \frac{\partial}{\partial\rho_i} P_\beta(\rho^*)$ as above.

We have shown how the contract was derived. Now we will discuss its properties. To see that Lemma \ref{lem:generalized_corresponding_contract} is true, we calculate the expected payment ($E[u]$) to the expert when she reports her true belief $\rho^*$:

\begin{align*}
E[u] &= \sum_{i = 1}^n (\text{probability that outcome } i \text{ occurs}) \cdot (\text{payment for outcome } i) \\
&= \sum_{i = 1}^n \rho_i^* \cdot (P_\beta(\rho^*) - \left\langle {\nabla{P_\beta(\rho^*)} ,\rho^* } \right\rangle  + \frac{\partial}{\partial\rho_i^*} P_\beta(\rho^*)) \\
&= P_\beta(\rho^*) - \left\langle {\nabla{P_\beta(\rho^*)} , \rho^* } \right\rangle + \sum_{i = 1}^n \rho_i^* \cdot \frac{\partial}{\partial\rho_i} P_\beta(\rho^*) \\
&= P_\beta(\rho^*) - \left\langle {\nabla{P_\beta(\rho^*)} , \rho^* } \right\rangle + \left\langle {\nabla{P_\beta(\rho^*)} , \rho^* } \right\rangle \\
&= P_\beta(\rho^*) 
\end{align*}

We now discuss the truthfulness of the contract. Suppose the expert has belief $\rho'$ but reports $\rho^*$. The expert's payment is based on the hyperplane constructed using $\rho^*$ (call this hyperplane $L^*(\rho)$. Her expected payment is $L^*(\rho')$. Since $P_\beta(\rho)$ is a convex function (from Lemma \ref{lem:convex_function}), the tangent plane $L(\rho)$ is a supporting hyperplane. That is, each point of $P_\beta(\rho)$ is above the hyperplane other than at $\rho^*$ where they are equal. This means that $L^*(\rho')$ is maximized when $\rho^* = \rho'$ because the hyperplane with the highest value at $\rho = \rho'$ is the one tangent to $P_\beta(\rho)$ at $\rho'$. This is the reasoning behind Lemma \ref{lem:generalized_truthful_contract}.

The next step is to consider the utility for the expert when given contracts corresponding to the different preference curves. The only difference between such contracts is the price of the contract: the payment from a contract corresponding to the preference curve $P_\beta(\rho)$ is equal to the payment from the contract corresponding to preference curve $P_0(\rho)$ minus $\beta$. This means that when the expert is choosing the research distribution $\mu$ to maximize her utility, she only needs to consider the utility for the contract corresponding to $P_0(\rho)$. Based on Lemma \ref{lem:generalized_corresponding_contract}, we know that if the expert reaches probability vector $\rho$, she will be paid $P_0(\rho)$. Thus, if the expert chooses research distribution $\mu$, her expected payment is $\E_{\rho\sim \mu} P_0(\rho)$

Expert $e_i$ will choose the $\mu$ that maximizes her expected utility. If $\mu'$ is the chosen $\mu$, then:
$$\mu' = \operatorname*{arg\,max}_\mu \left[ \E_{\rho\sim \mu} P_0(\rho) - C_i(\mu) \right]$$

This means the expert will choose the research distribution $\mu$ that maximizes her expected profit which is the utility gained from the contract minus the cost of doing that research. Thus, the maximum expected utility earned from the contract corresponding to $P_0(\rho)$ is, as we defined earlier:
$$U_i:= \max_{\mu\in M_i} \left[ \E_{\rho\sim \mu} P(\rho)-C_i(\mu) \right].$$

Therefore, expert $e_i$ will make an expected profit of 0 if she pays a fee $U_i$ for the contract corresponding to $P_0(\rho)$. This means the best preference curve the expert can achieve is the preference curve $P_\beta(\rho)$ where $\beta = U_i$. We thus have the following result:

\begin{lemma} \label{lem:final_best_pref_curve}
The best preference curve that an expert can achieve without making a negative expected profit is the one characterized by: $$\beta = U_i$$
\end{lemma}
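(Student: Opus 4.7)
The plan is to combine three ingredients already established in the excerpt: the truthful contract evaluation (Lemma \ref{lem:generalized_corresponding_contract}), the vertical-shift structure of preference curves, and the participation constraint. First I would observe that if the expert accepts the contract corresponding to $P_\beta(\rho)$ and reports truthfully, then by Lemma \ref{lem:generalized_corresponding_contract} applied pointwise in $\rho$ and by linearity of expectation, her expected payment when she uses research distribution $\mu$ is $\E_{\rho\sim\mu}P_\beta(\rho)$. Since by construction $P_\beta(\rho)=P_0(\rho)-\beta$, this expected payment equals $\E_{\rho\sim\mu}P_0(\rho)-\beta$.

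Next I would write the expert's expected profit under $P_\beta$ as
\[
\pi_i(\beta,\mu)\;=\;\E_{\rho\sim\mu}P_0(\rho)-\beta-C_i(\mu).
\]
Because $\beta$ enters additively and does not depend on $\mu$, the expert's optimal research choice is decoupled from $\beta$: she picks
\[
\mu^\star\;=\;\operatorname*{arg\,max}_{\mu\in M_i}\bigl[\E_{\rho\sim\mu}P_0(\rho)-C_i(\mu)\bigr],
\]
which is precisely the maximizer in the definition of $U_i$. Substituting gives the expert's best achievable expected profit as $U_i-\beta$.

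Finally I would invoke the participation constraint: the expert accepts the contract only if $U_i-\beta\geq 0$, i.e.\ $\beta\leq U_i$. Since preference curves are ordered by their $\beta$ parameter (higher $\beta$ is strictly better for the principal and strictly worse for the expert by exactly the same amount), the supremum among curves satisfying the participation constraint is attained at $\beta=U_i$, at which point the expert's expected profit is exactly zero. This yields the claimed characterization.

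The steps are largely mechanical once the decoupling observation is made, so there is no genuine obstacle; the only subtle point worth stating carefully is that the best-response $\mu^\star$ is independent of $\beta$, because it is this fact that lets one maximize over $\mu$ and then tighten $\beta$ up to $U_i$ without the two optimizations interacting.
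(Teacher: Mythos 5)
Your proposal is correct and follows essentially the same route as the paper: use the fact that $P_\beta(\rho)=P_0(\rho)-\beta$ so the shift decouples from the choice of $\mu$, reduce the optimization to the $P_0$ contract (whose optimum is $U_i$ by Lemma \ref{lem:generalized_corresponding_contract} and the definition of $U_i$), and then tighten $\beta$ until the expected profit $U_i-\beta$ hits zero. Your explicit note that the maximizing $\mu^\star$ is independent of $\beta$ is exactly the observation the paper makes, just stated a bit more formally.
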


We are now ready to present the mechanism leading to our main result. 

\section{Mechanism} \label{sec:mechanism}

Our mechanism involves running a second-price auction to choose which expert the principal contracts and what contract this expert receives. We have experts $e_i,\ldots,e_k$ who want to do the research, each with a cost function $C_i(\mu)$, where $i$ is the agent's index and their available research distributions $\mu$ and cost functions $C_i(\mu)$ are not necessarily identical. This is to account for the differences among the experts as they will differ in the technologies for research available to them. The mechanism is as follows:
\begin{enumerate}
\item The principal reveals $P_0(\rho)$ to the experts.
\item Expert $e_i$ submits a bid $b_i$. This is the value of $\beta$ in the contract which they will receive. We expect each expert's bid to be the highest value of $\beta$ for which the expert expects a non-negative profit (as we found in Lemma \ref{lem:final_best_pref_curve}).
\item Expert $e_i$ receives the contract corresponding to preference curve $P_\beta(\rho)$ with $\beta = b_j$, where experts $e_i$ and $e_j$ are the experts that submitted the highest and second highest bids respectively (we break ties randomly). The expert cannot resign the contract -- she must report a belief $\rho$ and be paid according to the contract.
\end{enumerate}

We are now ready to prove Theorem \ref{thm:main_theorem}:


\begin{proof} We follow the canonical second-price auction truthfulness proof as in \cite{Parkes}. Consider the available experts and fix attention on expert $e_1$. Let $v_1$ be the $\beta$ value of the best preference curve the expert can achieve without making negative expected profit. Let $b' = \max_{j \neq 1}b_j$ denote the maximum bid from the other experts. We proceed by case analysis.

Case $1$: $v_1 > b'$. Expert $e_1$'s best response is to bid greater than $b'$ because he will then receive the contract with $\beta = b'$ from which he can make an expected profit of $v_1 - b'$. In particular, bidding $b_1 = v_1$ is a best response.

Case $2$: $v_1 = b'$. Expert $e_1$ is indifferent across all bids. If he wins, he receives a contract from which he makes an expected profit of $0$. If he loses, his utility remains $0$. In particular, bidding $b_1 = v_1$ is a best response.

Case $3$: $v_1 < b'$. Expert $e_1$'s best response is to bid less than $b'$. This will generate a utility $0$, whereas bidding higher than $b'$ will result in winning a contract from which the expert will make an expected loss, leading to a utility below $0$. In particular, bidding $b_1 = v_1$ is a best response.

We conclude that regardless of what the other experts bid, expert $e_1$ should bid truthfully. Since we chose expert $e_1$ without a loss in generality, this applies to all the experts. Therefore, truthful bidding is a dominant-strategy equilibrium. 

We also see that the contract is given to the expert with the highest bid, who is the expert that can achieve the best principal's preference curve (without an expected loss) out of all the experts (i.e. expert $e_i$ such that $i = \argmax U_i$. Therefore, the mechanism is efficient.

Finally, since the contract chosen is the one corresponding to preference curve with $\beta = b_j$ where $j =\argmax b_j =\argmax U_j$ with $j \neq i$, we have that $U_P = \max_{j\neq i} U_j$. 
\end{proof}

The mechanism also fulfills the participation constraint. As long as experts make truthful bids, they will receive contracts from which they can make non-negative expected utility (as long as their probability reports are also truthful). 

Furthermore, the mechanism allows the principal to have a lower bound preference curve (corresponding to a second-price auction with reserve). Consider the situation where the principal has some preference curve that is the worst preference curve he would want to contract an expert for. This means he has a minimum value of $\beta'$ for the preference curves $P_\beta(\rho)$ that he wants to achieve. Even if he can find an expert who will take the contract matching the preference curve with a $\beta < \beta'$, the principal would prefer not to contract the expert. This is useful in a realistic setting, where the principal has minimum requirements on what precision level should be learned from any particular cost of research. 

The solution for both cases is to run the same mechanism as above but as a second-price auction with a reserve. This means there is a virtual bid corresponding to the minimum preference curve $\beta'$. The auction proceeds as before, but if this virtual bid is the highest, no expert is contracted. This preserves the properties of the second-price auction but guarantees to the principal that an expert will only be contracted if a preference curve $P_\beta(\rho)$ with $\beta \geq \beta'$ can be achieved.


We can also think about our result in terms of in-house experts. An in-house expert is one that works directly for the principal (they share costs). The principal knows what research distributions $\mu$ are available to the expert and also their cost $C(\mu)$. Thus, the principal knows what the best preference curve the expert can achieve is and give him a contract corresponding to that preference curve. The cost incurred by the principal for the research is equal to the cost the principal would pay an outside expert who is given the contract. With this view of in-house experts, our result can also be stated as:

\begin{theorem}
There exists a truthful mechanism for hiring the best expert to do work equivalent to having the second best expert in-house.
\end{theorem}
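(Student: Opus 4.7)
The plan is to derive this theorem as a direct corollary of Theorem \ref{thm:main_theorem}, essentially by unpacking the definition of an in-house expert. First I would formalize the notion: if expert $e_i$ is in-house, the principal sees both the set of research technologies $M_i$ and the cost function $C_i$, so he can dictate the optimal choice of $\mu \in M_i$ and capture the entire surplus. This surplus is, by definition of $U_i$, exactly
$$U_i = \max_{\mu \in M_i} \left[ \E_{\rho \sim \mu} P(\rho) - C_i(\mu) \right].$$
Consequently, ``having the second-best expert in-house'' is the statement that the principal's utility equals $\max_{j \neq i^*} U_j$, where $i^* = \argmax_i U_i$.

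Next I would invoke Theorem \ref{thm:main_theorem} directly. That theorem produces a dominant-strategy truthful mechanism that (i) contracts the expert $e_{i^*}$ maximizing $U_i$, and (ii) yields expected principal utility $U_P = \max_{j \neq i^*} U_j$. Matching this with the previous paragraph's identification gives exactly the claim: the principal hires the best expert, while his realized utility coincides with that of having the second-best expert in-house. Truthfulness of the mechanism is inherited from Theorem \ref{thm:main_theorem}, and the participation constraint is preserved because the contract given to the winning expert has $\beta = U_{j^*} \leq U_{i^*}$, so $e_{i^*}$'s expected profit is $U_{i^*} - U_{j^*} \geq 0$.

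There is no real obstacle here beyond making the semantic translation explicit: the potentially subtle point is ensuring that the ``in-house'' benchmark is really $U_i$ and not something stronger (for instance, one might worry the principal could extract more by observing the expert's posterior $\rho$ before paying). But since $P$ is the principal's utility on posteriors and $C_i(\mu)$ is a sunk cost incurred by whoever performs the research, any in-house arrangement nets exactly $U_i$ in expectation by Jensen-type reasoning combined with the optimality of $\max_\mu$. Thus the translation is tight, and the theorem follows immediately from Theorem \ref{thm:main_theorem} together with this interpretational remark.
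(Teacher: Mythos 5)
Your proposal is correct and follows essentially the same route as the paper, which presents this theorem as a restatement of Theorem \ref{thm:main_theorem} after observing that having expert $e_i$ in-house yields the principal exactly $U_i$, so that the mechanism's guarantee $U_P = \max_{j\neq i^*} U_j$ is precisely the second-best in-house benchmark. Your additional remarks on participation and on why the in-house benchmark is exactly $U_i$ are consistent with the paper's informal justification.
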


\section{Maximum Risk} 
Unless the expert is completely certain about which outcome of $G$ will occur, she has some uncertainty about her exact payment. For example, in the binary case, the expert might do a lot of costly research and believe that $G$ will occur with a high probability. This can lead to a contract that pays her if $G$ occurs but requires her to pay if $G$ does not occur. If $G$ does not occur, she loses the money spent on research and the amount she has to pay. In a realistic setting, she may have limited funds which she cannot exceed. This can be viewed as a maximum risk or the need for limited liability.

As an example, let $G$ be a binary event and that the expert believes $G$ will occur with probability 0.9. Consider the two contracts described in the following table:
\par
\bigskip

\begin{center}
\begin{tabular}{| l | c | c | r }
  \hline                        
   & Payment when $G$ does not occur & Payment when $G$ occurs \\ \hline
  Contract A & -8,000 & 1,000 \\ \hline
  Contract B & -8,000,000 & 889,000 \\
  \hline  
\end{tabular}
\end{center}
\par
\bigskip

Both contracts have an expected payment of 100 for the expert. An expert might be willing to risk 8,000 to make an expected profit of 100 with contract A, but she might not be able to pay 8,000,000 if she takes contract B and $G$ does not occur. Therefore, even though our previous analysis viewed contracts A and B equally, realistically they are not equivalent. Note that the principal might also have a maximum amount he can pay the expert for the information.

We are interested in how the mechanism has to be modified when there is a maximum amount the experts and principal are willing to pay. We let $\phi_p$, $\phi_e$ be the maximum amounts the principal and expert are willing to pay respectively and then derive how using these limits affect the contracts that can be used. We state the result here and provide the derivation (and an example) in Appendix section \ref{appendix:max_risk}. Our result limits how high the expert can bid as the research tests she can run ($\mu$) are now restricted. The new bid by the expert is given by the following lemma:

\begin{lemma} \label{lem:max_risk_lemma}
The best preference curve that an expert can achieve without making a negative expected profit and violating her own or the principal's maximum risk is the one characterized by:
$$\beta' = \operatorname*{arg\,max}_\beta \left[\max_{\mu \in M(\beta)} \int_{\Delta^{n - 1}} P_\beta(\rho) \,d\mu(\rho) - C(\mu) \right]$$
\end{lemma}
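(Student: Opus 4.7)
The plan is to adapt the unconstrained analysis of Lemma \ref{lem:final_best_pref_curve} by first shrinking the expert's admissible research technologies to those compatible with the risk caps $\phi_e$ and $\phi_p$, and then re-running the optimisation. Let $M(\beta)$ denote the set of distributions $\mu$ such that, after truthful reporting of any $\rho$ in the support of $\mu$, the contract payment upon observing outcome $i$,
$$P_\beta(\rho) - \langle \nabla P_\beta(\rho), \rho \rangle + \frac{\partial}{\partial \rho_i} P_\beta(\rho),$$
lies in $[-\phi_e, \phi_p]$ for every $i$. Since $P_\beta = P_0 - \beta$ enters the payment only as an additive constant, raising $\beta$ tightens the expert-side lower bound while loosening the principal-side upper bound, so $M(\beta)$ genuinely varies with $\beta$ and must be treated as a correspondence rather than a fixed set.

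Second, for any admissible pair $(\mu, \beta)$ with $\mu \in M(\beta)$ the scoring-rule machinery of Section \ref{sec:contract_description} carries over verbatim: $P_\beta$ inherits convexity from $P_0$, so Lemma \ref{lem:generalized_truthful_contract} still forces truthful reporting, and Lemma \ref{lem:generalized_corresponding_contract} gives expected payment $\int_{\Delta^{n-1}} P_\beta(\rho)\,d\mu(\rho)$. Subtracting the research cost yields the expert's expected profit $\int P_\beta(\rho)\,d\mu(\rho) - C(\mu)$, and for each fixed $\beta$ the expert optimises over $\mu \in M(\beta)$, producing the inner maximum in the statement.

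Finally, I would replicate the logic of Lemma \ref{lem:final_best_pref_curve}: in the second-price mechanism the expert should commit to the largest $\beta$ she can still sustain with non-negative expected profit, since lower bids forfeit surplus while higher bids expose her to losses she can no longer recoup under the restricted $M(\beta)$. That largest $\beta$ is precisely $\beta'$, with the outer $\operatorname{argmax}$ read as selecting, over the feasibility region of the bracketed value function, the bid at which the expert's surplus is driven to zero by the risk-induced restriction rather than by the payment shift alone. The main obstacle is showing that this optimum is actually attained rather than being merely a supremum: the correspondence $\beta \mapsto M(\beta)$ can a priori jump as a payment at some extreme outcome crosses $-\phi_e$ or $\phi_p$, so one must verify upper semicontinuity of the value function $\beta \mapsto \max_{\mu \in M(\beta)}[\int P_\beta\,d\mu - C(\mu)]$ using continuity of the payment formula in $\beta$, compactness of $\Delta^{n-1}$, and mild regularity of $C$ on $M_i$, so that $M(\beta)$ collapses continuously at the boundary. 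Once semicontinuity is in hand, the supremum is attained and the displayed characterisation follows; the remainder of the argument is routine packaging of the Section \ref{sec:contract_description} scoring-rule identities.
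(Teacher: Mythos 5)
Your proposal follows essentially the same route as the paper's derivation in Appendix \ref{appendix:max_risk}: bound the per-outcome contract payments by $[-\phi_e,\phi_p]$, translate this into a restriction on admissible reports and hence on research tests, define the correspondence $M(\beta)$, reuse Lemmas \ref{lem:generalized_corresponding_contract} and \ref{lem:generalized_truthful_contract} for the expected payment, and then select the best sustainable $\beta$ exactly as in Lemma \ref{lem:final_best_pref_curve}. Your additional concern about whether the optimum is attained (semicontinuity of the value function as $M(\beta)$ varies) is a legitimate refinement that the paper's informal derivation simply does not address.
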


Where

$$M(\beta) = \{\mu \,|\, \text{If } \mu(\rho) > 0 \text{ then } -\phi_e \leq P_\beta(\rho) - \\
\left\langle {\nabla{P_\beta(\rho)} ,\rho } \right\rangle + \frac{\partial}{\partial\rho_i} P_\beta(\rho) \leq \phi_p \text{ for } 1 \leq i \leq n\}$$

\section{Discussion} \label{sec:discussion}

One question we might ask is whether we can change the contract while maintaining the important properties it has. For example, to deal with the maximum risk problem discussed above, we might try a different solution of simply changing the payment of each outcome to an acceptable range $[-\phi_e, \phi_p]$. However, we cannot do this without losing important properties of the contract. We have the following result:

\begin{theorem}
Our contract is unique in its equality in expectation of payment to the preference curve (Lemma \ref{lem:generalized_corresponding_contract}) and its truthfulness property (Lemma \ref{lem:generalized_truthful_contract}).
\end{theorem}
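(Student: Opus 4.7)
The plan is to parametrize an arbitrary candidate contract by a payment function $S(\rho^*, i)$ giving the amount paid when the expert reports $\rho^*$ and outcome $i$ is subsequently observed. I would then show that requiring this function to satisfy both Lemma \ref{lem:generalized_corresponding_contract} (expected payment equals $P_\beta(\rho^*)$ on truthful reports) and Lemma \ref{lem:generalized_truthful_contract} (truthfulness is a best response) forces $S$ to coincide with the formula in Section \ref{sec:contract_description}.

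First I would write down the expected payment to an expert whose true belief is $\rho'$ but who reports $\rho^*$:
$$F(\rho^*, \rho') \;:=\; \sum_{i=1}^n \rho_i'\, S(\rho^*, i).$$
The crucial feature is that $F(\rho^*, \rho')$ is affine (in fact linear) in its second argument $\rho'$. Lemma \ref{lem:generalized_corresponding_contract} then becomes the identity $F(\rho', \rho') = P_\beta(\rho')$ for every $\rho'$, and Lemma \ref{lem:generalized_truthful_contract} becomes the inequality $F(\rho^*, \rho') \leq F(\rho', \rho') = P_\beta(\rho')$ for every $\rho^*$ in the simplex.

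Next I would combine the two. For each fixed $\rho^*$ the affine function $\rho' \mapsto F(\rho^*, \rho')$ agrees with $P_\beta$ at $\rho'=\rho^*$ and lies weakly below $P_\beta$ everywhere else on the simplex, i.e., it is a supporting affine function to $P_\beta$ at $\rho^*$. Since $P_\beta$ is convex by Lemma \ref{lem:convex_function} and twice differentiable by assumption, the supporting affine function at any interior point of the simplex is unique and equal to the first-order Taylor expansion
$$\rho' \;\longmapsto\; P_\beta(\rho^*) + \bigl\langle \nabla P_\beta(\rho^*),\, \rho' - \rho^* \bigr\rangle.$$
The usual freedom in extending this off the simplex (adding multiples of $\sum_i \rho_i - 1$) has no effect on its values on the simplex, and in particular no effect on its values at the $n$ vertices; the identity $F(\rho^*,\rho^*)=P_\beta(\rho^*)$ pins down the constant. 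Finally, evaluating at the vertex $\hat{\rho_i}$ gives $S(\rho^*, i) = F(\rho^*, \hat{\rho_i}) = P_\beta(\rho^*) - \langle \nabla P_\beta(\rho^*), \rho^* \rangle + \frac{\partial}{\partial \rho_i} P_\beta(\rho^*)$, exactly the contract of Section \ref{sec:contract_description}; boundary reports are handled by continuity.

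The main obstacle, and the step that requires the most care, is the uniqueness of the supporting hyperplane on a lower-dimensional constrained domain: naively the gradient is only determined modulo vectors parallel to $(1,\ldots,1)$. The point that makes the argument go through is that $S(\rho^*, i)$ are literally the $n$ values of the supporting affine function at the simplex vertices $\hat{\rho_i}$, and these $n$ values are intrinsic to the restriction of the affine function to the simplex and therefore unambiguous. Once this is observed, the rest is the standard supporting-hyperplane argument for convex differentiable functions.
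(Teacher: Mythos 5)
Your proposal is correct and follows essentially the same route as the paper's proof: both reduce an arbitrary candidate contract to an affine function of the belief that agrees with $P_\beta$ at the report, use truthfulness to force it weakly below $P_\beta$ on the simplex, and invoke convexity plus differentiability (unique supporting hyperplane) to conclude the payments must be the tangent-plane payments of Section \ref{sec:contract_description}. You argue directly where the paper argues by contradiction, and you are in fact more careful than the paper about the gradient-modulo-$(1,\dots,1)$ ambiguity on the simplex; your appeal to ``continuity'' for relative-boundary reports is the only informal step, and the paper's own proof is no more rigorous there.
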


\begin{proof}
A contract needs to specify the payments made from the principal to the expert for each outcome of $G$. Suppose there is a contract other than the one given by our mechanism ($L(\rho)$) that fulfills Lemmas \ref{lem:generalized_corresponding_contract} and \ref{lem:generalized_truthful_contract}. This contract pays $\alpha_i$ if outcome $i$ occurs. For Lemma \ref{lem:generalized_corresponding_contract} to be fulfilled, when the expert believes $\rho^*$ we need her expected payment to be $P_\beta(\rho)$:
$$\sum_{i = 1}^n \rho_i^* \cdot \alpha_i = P_\beta(\rho^*)$$
This is the equation for a hyperplane and thus we know the contract can be represented as a hyperplane $L'(\rho)$. We know that $L(\rho^*) = L'(\rho^*) = P_\beta(\rho^*)$. For the two hyperplanes to be different, $L'(\rho)$ must be a rotation of $L(\rho)$ about its value at $\rho^*$. We want to know whether $L'(\rho)$ is also a supporting hyperplane.

The preference curve $P_\beta(\rho)$ is twice differentiable, which requires that it is differentiable at each $\rho$. By Theorem 3.1 of \cite{Berkovitz}, $P_\beta(\rho)$ is differentiable at $\rho$ if and only if $P_\beta(\rho)$ has a unique subgradient at $\rho$ (this unique subgradient is $\nabla P_\beta(\rho)$) because $P_\beta(\rho)$ is a convex function. Therefore, $L(\rho)$ is the unique supporting hyperplane and $L'(\rho)$ is not. This means that $L'(\rho)$ can have values above $P_\beta(\rho)$. Suppose this occurs, for example, at $\rho'$ ($P_\beta(\rho') < L'(\rho')$). If the expert has belief $\rho'$ but reports $\rho^*$, her expected payment increases from $P_\beta(\rho')$ to $L'(\rho')$. Thus, it is not optimal for the expert to be truthful. Therefore, the only contract that fulfills Lemmas \ref{lem:generalized_corresponding_contract} and \ref{lem:generalized_truthful_contract} is the one we use.
\end{proof}

We will now discuss the connection of our mechanism to auctions. Our mechanism for a single item can be reduced to a second-price auction and by Myerson's result in \cite{Myerson} is the best we can do (when we include a reserve). The reduction is as follows. The principal is the seller and the experts are the buyers. The expected payment an expert can receive from the contract after she pays for it is the value of the auction item for the expert. The contracts given by the principal differ only in their $\beta$ value. Since this is how much the expert would pay to receive the contract, we can view the $\beta$ that an expert bids to be her valuation. Thus, the contracts are the items used in the second-price auction. Therefore:

\begin{claim}
Our mechanism for a single contract can be reduced to a second-price auction.
\end{claim}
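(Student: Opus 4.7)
The plan is to exhibit an explicit correspondence between the ingredients of our mechanism and those of a standard single-item second-price auction, and then check that the two mechanisms produce identical allocations and payments on every profile of bids. Recall that a second-price auction is specified by (i) a set of bidders with private quasi-linear valuations for one indivisible item, (ii) an allocation rule that awards the item to the highest bidder, and (iii) a payment rule that charges the winner the second-highest bid.

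First I would set up the correspondence. The seller is the principal, the bidders are the experts $e_1,\ldots,e_k$, and the ``item'' is the right to be contracted to perform the research. To assign a well-defined private valuation to this item, I would invoke Lemma \ref{lem:final_best_pref_curve}: the maximum $\beta$ at which expert $e_i$ can accept the contract corresponding to $P_\beta(\rho)$ and still expect non-negative profit is exactly $v_i := U_i$, and at any $\beta \leq U_i$ her optimal research and (by Lemma \ref{lem:generalized_truthful_contract}) truthful reporting yield expected net utility $U_i - \beta$. Hence $v_i$ plays the role of the expert's private valuation for winning, in the quasi-linear sense that her payoff from winning at effective price $\beta$ is $v_i - \beta$, while losing gives $0$.

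Next I would compare allocation and payment rules. In our mechanism, bids $b_i$ are the values of $\beta$ each expert declares she is willing to pay; the highest bidder $e_i$ receives the contract priced at $\beta = b_j$ where $j$ is the index of the second-highest bid. The winner's expected net utility is $U_i - b_j = v_i - b_j$, and every other expert gets $0$. This is literally the outcome of a second-price auction over bidders with valuations $v_i$ and reports $b_i$. I would then point out that the dominant-strategy analysis in the proof of Theorem \ref{thm:main_theorem} is exactly the canonical truthfulness proof for the second-price auction, so the strategic side of the reduction is also in place.

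The only nontrivial step is the valuation mapping itself, namely justifying that an expert's strategic situation inside the contract (the choice of $\mu$ and of the reported $\rho$) collapses to a single scalar valuation $v_i$ that depends only on her private type, not on what price she ends up facing. This is precisely what Lemmas \ref{lem:generalized_truthful_contract} and \ref{lem:final_best_pref_curve} give us: conditional on winning at price $\beta$, the expert's optimal interior play delivers expected utility exactly $U_i - \beta$, independent of $\beta$ (as long as $\beta \leq U_i$). Once this quasi-linear, single-parameter reformulation is verified, the reduction to a second-price auction is immediate and the claim follows.
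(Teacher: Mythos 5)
Your reduction is correct and follows essentially the same route as the paper: principal as seller, experts as bidders, the net expected value $U_i$ of the contract (pinned down by Lemmas \ref{lem:generalized_truthful_contract} and \ref{lem:final_best_pref_curve}) as the private valuation, and the winner charged the second-highest $\beta$. Your write-up is in fact more careful than the paper's informal paragraph, since you explicitly verify the quasi-linear, single-parameter structure that makes the correspondence exact.
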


\section{Future Work} \label{sec:future_work}


We have already explored some variation in the utility functions when we discussed the case of maximum risk, where the principal or the experts may have a limited budget that they cannot exceed, in Section \ref{sec:discussion}. This can be represented by a utility function that goes to $-\infty$ when this budget is exceeded (but is linear elsewhere as per our assumption). Our discussion assigned an absolute value to the budget of the parties. However, their preferences might be more complex: an expert might be willing to risk a certain amount of money only for a certain expected payout (for example, willing to risk 1,000,000 if she will win 10,000,000 with at least 0.95 probability). Forecast elicitation with agents who are not risk-neutral has been discussed in, for example, \cite{Winkler}, \cite{Winkler2}, \cite{Offerman}. Investigating realistic risk preferences and applying them to the mechanism is an interesting path for future work to take.

Our mechanism allows the principal to give a contract to only one expert. However, in some applications, it might be beneficial to get multiple experts to cooperate. Two experts can collaborate to decrease test costs, which would benefit the principal. This will require new types of contracts that incentivize the experts to work together based on their joint information structure. One very natural approach inspired by our one-expert mechanism is to try to lift results about multi-unit auctions to multi-expert mechanisms (where each ``item" correspond to the right to a prediction contract). The difficulty with this approach is that the principal's utility may be a very complex function of the experts' signals. Moreover, the marginal utility of one expert's technology may depend on the technology available to the other expert. In general, this means that there is little hope of approximating e.g. having two of the top experts in house using a mechanism such as ours. By adding assumptions on the information structure (such as independence) one can expect to make progress in this setting. 

The problem we explored involves a principal who contracted an expert for a one-time prediction. If the principal was to contract experts for predictions on a continual and sequential basis, he could learn about the performance of experts and use this information to increase his utility. This changes the problem to one of online learning. It would be interesting to explore how the principal can use this new information and how much it increases his utility. The experts might also be able to use this information. 

\newpage

\bibliographystyle{acm}
\bibliography{bibliography}

\newpage

\appendix

\section*{Appendix}

\section{Maximum Risk Derivation and Example} \label{appendix:max_risk}

We are interested in finding how such maximum risk for the principal or expert might affect the results of the mechanism. We achieve this using our geometrical understanding of the contracts from our mechanism. First, we consider what happens if the principal has some maximum amount $\phi_p$ that he can afford to pay. Suppose a contract exists that has a payment greater than $\phi_p$ for some outcome. This requires that the supporting hyperplane of some point on the preference curve $P_\beta(\rho)$ has a value greater than $\phi_p$ for input $\hat{\rho_i}$ where $1 \leq i \leq n$, as this means the principal pays the expert more than $\phi_p$ for outcome $i$. However, since the supporting hyperplane can never be above the convex function, this only occurs if the preference curve has a value above $\phi_p$. Since the principal knows his preference curves, he can ensure no contract is given that might result in a payment greater than $\phi_p$. One approach to this is:
\begin{enumerate}
\item Calculate the minimum $\beta'$ such that $P_{\beta'}(\hat{\rho_i}) \leq \phi_p$ for $1 \leq i \leq n$.
\item Run the mechanism as a second-price auction with reserve $\beta'$.
\end{enumerate}  

While this approach is simple (requires just one calculation from the principal and no additional work from the experts), it might be restrictive. It is possible that none of the experts can achieve even the minimum preference curve required by the auction, but they would be able to achieve a point on a worse preference curve that does not require the principal to pay more than $\phi_p$. Both the principal and the expert lose the possibility for positive expected utility because of this. 

Another approach could change the contracts instead of the auction. The principal can announce that he will not pay more than $\phi_p$. This places a restriction on which probability reports $(\rho^*)$ can be made for each contract. This restriction for the contract corresponding to $P_{\beta'}(\rho)$ is so that only reports $\rho^* \in \{\rho^* \,|\, P_\beta(\rho^*) - \left\langle {\nabla{P_\beta(\rho^*)} ,\rho^* } \right\rangle + \frac{\partial}{\partial\rho_i} P_\beta(\rho^*) \leq \phi_p \text{ for } 1 \leq i \leq n\}$ can be made. Geometrically, this means we don't allow probability reports that lead to a supporting hyperplane that has a value greater than $\phi_p$ for any outcome.

This approach will also work when the expert has a maximum risk of $\phi_e$. If the expert can only pay at most $\phi_e$, the restriction now is that the expert can only report $\rho^* \in \{\rho^* \,|\, P_\beta(\rho^*) - \left\langle {\nabla{P_\beta(\rho^*)} ,\rho^* } \right\rangle + \frac{\partial}{\partial\rho_i} P_\beta(\rho^*) \geq -\phi_e \text{ for } 1 \leq i \leq n\}$. Before reporting $\rho^*$, the expert can check that there is no possibility that she will be forced to pay more than $\phi_e$ by fulfilling the above requirement. 

We have described how the principal and experts verify that the contract does not require them to pay more than is acceptable. For each contract, we restrict the domain so that allowable reports are: 
$$\rho^* \in \{\rho^* \,|\, -\phi_e \leq P_\beta(\rho^*) - \left\langle {\nabla{P_\beta(\rho^*)} ,\rho^* } \right\rangle \\
+ \frac{\partial}{\partial\rho_i} P_\beta(\rho^*) \leq \phi_p \text{ for } 1 \leq i \leq n\}$$

However, the expert's optimization problem has now changed because of this limitation. The research tests $\mu$ we have discussed can lead to different values of $\rho$, some of which will not fulfill the above criteria (and this criteria depends on the $\beta$ of the preference curve chosen). We are interested in finding the highest $\beta$ for which the expert makes a non-negative expected profit with the additional constraint above. We let $M(\beta)$ be a set of research tests $\mu$ that is a function of $\beta$ (i.e. the research tests in the set depend on the input parameter $\beta$). This set represents the research tests that lead to reports $\rho$ that fulfill the restrictions:

$$M(\beta) = \{\mu \,|\, \text{If } \mu(\rho) > 0 \text{ then } -\phi_e \leq P_\beta(\rho) \\
- \left\langle {\nabla{P_\beta(\rho)} ,\rho } \right\rangle + \frac{\partial}{\partial\rho_i} P_\beta(\rho) \leq \phi_p \text{ for } 1 \leq i \leq n\}$$

We now have the optimization problem the expert needs to solve to find the best preference curve she can achieve (and thus her optimal bid value) that we stated in Lemma \ref{lem:max_risk_lemma}

It will be helpful to view an example of this for the binary event case ($n=2$, $\rho$ and $\rho_0$ are scalar values for outcome 1). Suppose the principal has no maximum risk ($\phi_p = \infty$) but the expert we are considering has maximum risk $\phi_e$ (and $-\phi_e < P_\beta(\phi_0)$ as the expert won't bid more than her maximum risk). We will show how to compute the probability reports that the expert can make for the contract corresponding to $P_\beta(\rho)$. The expert can only carry out research tests $\mu$ that have a positive probability of such probability reports. The set of tests allowed is $M(\beta)$. If the expert reports $\rho^*$, the payments for when $G$ occurs and does not occur are chosen by drawing the tangent line to $P_\beta(\rho)$ at $\rho^*$ and finding the intersections with the lines $\rho = 0$ and $\rho = 1$ (which correspond to the two different outcomes). Since $P_\beta(\rho)$ is convex and therefore has non-decreasing gradient, we can calculate the range of allowed $\rho^*$ by finding the minimum and maximum values allowed as it is a continuous range in our example. To find the maximum allowed $\rho^*$ ($\rho_{max}^*$), we find the straight line (which represents the contract) that is tangent to the preference curve at $\rho_{max}^*$ (has gradient $P_\beta'(\rho_{max}^*)$) and intersects $\rho = 0$ at $-\phi_e$:

$$-\phi_e - P_\beta(\rho_{max}^*) = P_\beta'(\rho_{max}^*) \cdot (0 - \rho_{max}^*)$$
Similarly, to find the minimum allowed $\rho^*$ ($\rho_{min}^*$) we find the tangent that intersects $\rho = 1$ at $-\phi_e$:
$$-\phi_e - P_\beta(\rho_{min}^*) = P_\beta'(\rho_{min}^*) \cdot (1 - \rho_{min}^*)$$

Since $P_\beta(\rho) = P_0(\rho) - \beta$ and $P_\beta'(\rho) = P_0'(\rho)$, we can change the above two equations to be variables in $\beta$ and $\rho_{max}^*$ or $\rho_{min}^*$. This will allow the expert to calculate the allowed range of $\rho^*$ (which is within $[0, 1]$) as well as analyze how changing $\beta$ affects this range:
$$-\phi_e - P_0(\rho_{max}^*) + \beta = -P_0'(\rho_{max}^*) \cdot \rho_{max}^*$$
$$-\phi_e - P_0(\rho_{min}^*) + \beta = P_0'(\rho_{min}^*) \cdot (1 - \rho_{min}^*)$$

We can view this geometrically by drawing the lines that are tangents to $P_\beta(\rho)$ that pass through $-\phi_e$ at $\rho = 0$ and $\rho = 1$ as we do in Figure \ref{fig:max_risk}. The research tests $\mu$ that are allowed are ones that are restricted to $\rho_{min} \leq \rho^* \leq \rho_{max}$. In this example, since the expected payment for all $\rho^*$ in this range is negative, we know that the expert cannot earn a positive expected payment from this contract. This is not dependent on the expert's cost structure.

\begin{figure}[ht]
	\centering
		\includegraphics[scale=0.4]{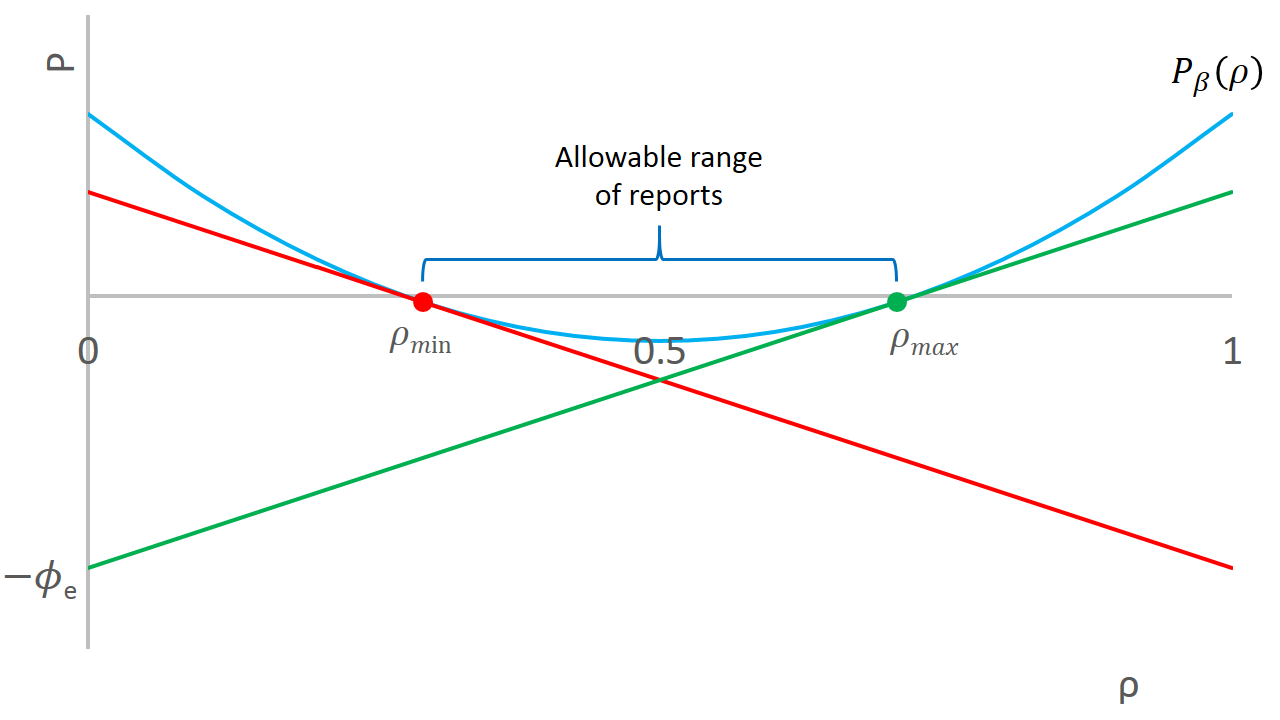}
	\caption{An example of finding the range of $\rho^*$ that the expert can report when she has a limited budget of $\phi_e$ and $G$ is a binary event.}
	\label{fig:max_risk}
\end{figure}

\end{document}